\definecolor{darkgreen}{rgb}{0,0.6,0} \definecolor{darkred}{rgb}{0.6,0,0}
\theoremstyle{plain}
\newtheorem{theorem}{Theorem}
\newtheorem{proposition}[theorem]{Proposition} 
\newtheorem{lemma}[theorem]{Lemma}
\numberwithin{theorem}{section}
\theoremstyle{definition}
\newtheorem{definition}[theorem]{Definition}
\newcommand{\abs}[1]{\lvert#1\rvert}
\newcommand{\norm}[1]{\|#1\|}
\newcommand{\inner}[1]{\langle#1\rangle}
\newcommand{\real}{\mathbb{R}}
\newcommand{\integer}{\mathbb{Z}}
\newcommand{\im}{\mathrm{Im}}
\newcommand{\CO}{\mathcal{O}}
\newcommand{\CL}{\mathcal{L}}
\newcommand{\myobot}{\mathop{\bigcirc\kern-15pt\perp}\nolimits}
\newcommand{\Ker}{\mathrm{ker}}
\newcommand{\mc}{\mbox{, }}
\newcommand{\Ext}{\mathrm{Ext}}
\title{Computing an LLL-reduced basis of the orthogonal lattice}
\author{
Jingwei Chen \\
Chongqing Key Lab of Automated Reasoning \& Cognition, \\
Chongqing Institute of Green and Intelligent Technology,\\ 
Chinese Academy of Sciences, China
\\\texttt{chenjingwei@cigit.ac.cn}
\and Damien Stehl\'e\\
ENS de Lyon, Laboratoire LIP \\
(UMR CNRS - ENS Lyon - UCB Lyon 1 - INRIA 5668),  France\\
\texttt{damien.stehle@ens-lyon.fr}
\and Gilles Villard\\
CNRS, Laboratoire LIP\\
(UMR CNRS - ENS Lyon - UCB Lyon 1 - INRIA 5668),  France\\
\texttt{gilles.villard@ens-lyon.fr}
}
\begin{document}

\date{\today}
\maketitle

\begin{abstract}
	As a typical application, the Lenstra-Lenstra-Lov\'asz  lattice basis reduction algorithm (LLL) is used to compute a reduced basis of the orthogonal lattice for a given integer matrix, via reducing a special kind of lattice bases. With such bases in input, we propose a new technique for bounding from above the number of iterations required by the LLL algorithm. The main technical
	ingredient is a variant of the classical LLL potential, which could prove useful
	to understand the behavior of LLL for other families of input bases.
\end{abstract}


\section{Introduction}

Let~$k<n$~be two positive integers. Given a full column rank $n\times k$ integer
matrix ${\bf A}=(a_{i,j})$, we study the behaviour of the Lenstra-Lenstra-Lov\'asz  algorithm~\cite{LenstraLenstraLovasz1982}
for computing a reduced basis for the \emph{orthogonal lattice} of ${\bf A}$
\begin{equation}
\label{eq:orthlatt}
\CL^\perp({\bf A}) = \left\{\bm m\in\integer^n: {\bf A}^T\bm m=\bm 0\right\}=\Ker({\bf A}^T)\cap\integer^n.
\end{equation}
The algorithm proceeds by unimodular column transformations from the input matrix
$\Ext_K({\bf A}) \in \integer^{(n+k) \times n}$:
\begin{equation}
\label{eq:CA}
\Ext_K({\bf A}) :=
\begin{pmatrix}
K\!\cdot\!{\bf A}^T \\
{\bf I}_n \\
\end{pmatrix} =\begin{pmatrix}
K\!\cdot\!a_{1,1} & K\!\cdot\!a_{2,1} & \cdots &K\!\cdot\!a_{n,1} \\
\vdots & \vdots &\ddots &\vdots\\
K\!\cdot\!a_{1,k} & K\!\cdot\!a_{2,k} & \cdots & K\!\cdot\!a_{n,k} \\
1 & 0 &\cdots  & 0 \\
0 & 1 & \cdots & 0 \\
\vdots & \vdots & \ddots & 0 \\
0 & 0 & \cdots & 1 \\
\end{pmatrix}.
\end{equation}
where $K$ is a sufficiently large positive integer. The related definitions and the LLL algorithm are given
in Section~\ref{sec:pre}.
The reader may refer to~\cite{NguyenVallee2010} for a comprehensive review of LLL, and to 
\cite{Schmidt1968} and \cite{NguyenStern1997} concerning the orthogonal lattice.

Usual techniques gives that LLL reduction requires  $\CO(n^2\log(K\cdot\norm{{\bf A}}))$ {\em swaps} (see Step~\ref{algostep:LLLswap} of Algorithm~\ref{algo:LLL}) for a  basis as in \eqref{eq:CA}, where $\norm{{\bf A}}$ bounds from above the  Euclidean norms of the rows and columns of~${\bf A}$.
We recall that most known  LLL reduction algorithms iteratively perform  two types of vector operations: translations and swaps.
The motivation for studying bounds on the number of swaps comes from the fact that
this number governs known cost analyses of the reduction.

Folklore applications of the reduction of bases as in~\eqref{eq:CA} include, for example, the computation
of integer relations between real numbers~\cite{HastadJustLagariasSchnorr1989,ChenStehleVillard2013}, the computation of minimal polynomials~\cite{KannanLenstraLovasz1984} (see also~\cite{NguyenVallee2010}). 
A main difficulty however, both theoretically and practically, remains to master the {\em scaling parameter}~$K$ that can be very large.  
Heuristic and practical 
solutions may for instance rely on a doubling strategy (successive trials with $K=2, 2^2, 2^4, \ldots$) for finding a suitable scaling. 
Or an appropriate value for $K$ may be derived from {\em a priori} bounds such as
heights of algebraic numbers~\cite{KannanLenstraLovasz1984} and may overestimate the smallest suitable
value for actual inputs. Since the usual bound on the number of swaps is linear in~$\log K$, the overestimation
could be a serious drawback. We  show that this may not be always the case.

We consider the reduction of a basis as in \eqref{eq:CA} for obtaining a basis of the orthogonal
lattice~(\ref{eq:orthlatt}). We establish a bound on the number of swaps that does not depend on $K$ as
soon as $K$ is above a threshold value (as specified in~(\ref{eq:chooseC})). 
This threshold depends only on the dimension and invariants of the orthogonal lattice.

\medskip
\noindent{\sc Our contribution.}
The  analyses of LLL and many LLL variants bound the number of iterations  using the geometric
decrease of a potential  that is defined using the
Gram-Schmidt norms of the basis vectors; see \eqref{eq:clsPot}.  We are going to see that this classical potential does not
capture a typical unbalancedness of the Gram-Schmidt norms that characterizes bases in~\eqref{eq:CA}.
Taking into account the latter structure will lead us to a better bound for the
number of iterations (see Table~\ref{tab:k}).
Intuitively, as the basis being manipulated becomes reduced, two groups of vectors are formed: some with small Gram-Schmidt norms, and some
others with large Gram-Schmidt norms. As soon they are formed, the two groups do not interfere much.

In Section~\ref{sec:potential} we introduce a new LLL potential function that generalizes the classical one for capturing the previously mentioned unbalancedness.
Its geometric decrease during the execution also leads to a bound on the number of iterations (see Theorem~\ref{th:pot}). In Section~\ref{sec:knapsack}, we specialize the potential to  the case of bases as in~\eqref{eq:CA} for computing the orthogonal lattice $\CL^\perp({\bf A})$. As discussed above, we will see that at some point the number of iterations can be shown to be independent of the scaling parameter~$K$, or, in other words, independent of a further increase of the input size. We note that this new potential is defined for all lattice bases, but it may not always lead to better 
bounds on the number of LLL iterations. 

\medskip
\noindent{\sc Related work.} 
The extended gcd algorithm in~\cite{HavasMajewskiMatthews1998}
uses a basis as in~\eqref{eq:CA} with $k=1$. It is shown in~\cite[Sec.\,3, p.\,127]{HavasMajewskiMatthews1998}
that if $K$ is sufficiently large, then the sequence of operations performed by  LLL is independent of~$K$.
A somewhat similar remark had been made in \cite{Pohst1987}. 
We also note that in the analysis of the gradual sub-lattice reduction algorithm 
of~\cite{vanHoeijNovocin2012}, a similar separation of large and small basis vectors was used, also for a better bound on the number of iterations. 
Our new potential function allows a better understanding of the phenomenon.

We see our potential function for LLL as a new complexity analysis tool 
that may help further theoretical and practical studies of LLL and its applications. 
Various approaches exist for computing the orthogonal lattice ${\bf A}$, or equivalently 
an integral kernel basis of ${\bf A}^T$. A detailed comparison of the methods remains to be done and would be 
however outside the scope of this paper that focuses on the properties of the potential. 
An integral kernel basis may be obtained from a unimodular multiplier for the Hermite 
normal form of~${\bf A}$~\cite{StorjohannLabahn1996} (see also~\cite{Storjohann2005} for the related linear system solution 
problem), which may be combined as in~\cite[Ch.\,8]{Sims1994} and~\cite{ChenStorjohann2005}
with LLL for minimizing the bit size of the output. 
A direct application of LLL to $\Ext_K({\bf A})$ is an important alternative solution.  We refer to~\cite{Stehle2017} and references therein 
concerning existing LLL variants.

\smallskip\noindent{\sc Future work.}
Future research directions are to apply this potential to bit complexity studies of the LLL
basis reduction~\cite{Storjohann1996, NovocinStehleVillard2011, NeumaierStehle2016}, especially for specific input bases. 
Indeed, an interesting problem is  to design an algorithm for computing a reduced basis for $\CL^\perp({\bf A})$ that features a bit complexity bound independent of the scaling parameter, and to compare it to approaches based on the Hermite normal form.

\smallskip\noindent{\sc Notations.} Throughout the paper, vectors are in column and denoted in bold. For $\bm x\in\real^m$,
$\norm{\bm x}$ is the Euclidean norm of~$\bm x$. 
Matrices are denoted by upper case letters in bold, such as ${\bf A}$, ${\bf B}$, etc. For a matrix ${\bf A}$, ${\bf A}^T$ is the transpose of ${\bf A}$, and $\norm{{\bf A}}$ bounds the Euclidean norms of the columns and rows of~${\bf A}$. 
The base of logarithm is~$2$.

\section{Preliminaries}
\label{sec:pre}
We give some basic definitions and results that are needed for the rest of the paper. A comprehensive
presentation of the LLL algorithm and its applications may be found in~\cite{NguyenVallee2010}.

\smallskip\noindent{\sc Gram-Schmidt orthogonalization.}
Let $\bm b_1, \cdots,\bm b_n\in\real^m$ be linearly independent vectors. Their \emph{Gram-Schmidt orthogonalization} $\bm b_1^*,\cdots,\bm b_n^*$ is defined as follows:
\[
\bm b_1^* = \bm b_1 \ \mbox{ and } \
\forall i>1: \bm b_i^* = \bm b_i - \sum_{j=1}^{i-1}\mu_{i,j}\bm b_j^*,
\]
where the $\mu_{i,j} = \frac{\inner{\bm b_i, \bm b_j^*}}{\inner{\bm b_j^*, \bm b_j^*}}$ for all~$i>j$ are called the \emph{Gram-Schmidt coefficients}. We call the~$\|\bm b_i^*\|$'s the \emph{Gram-Schmidt norms} of the~$\bm b_i$'s.

\medskip\noindent{\sc Lattices.} A \emph{lattice} $\Lambda\subseteq\real^m$ is a discrete additive subgroup of $\real^m$. If $(\bm b_i)_{i\le n}$ is a set of generators for $\Lambda$, then
\[
\Lambda =\CL(\bm b_1,\ldots,\bm b_n)=\left\{\sum_{i=1}^n z_i\bm b_i:
\, z_i\in\integer\right\}.
\]
If the $\bm b_i$'s are linearly independent, then they are said to form
a \emph{basis} of~$\Lambda$. When $n\ge 2$, there exist infinitely many bases for a lattice. Every basis is related by an integral unimodular transformation (a linear transformation with determinant $\pm 1$) to any other. Further, the number of vectors of different bases of a lattice~$\Lambda$ is always the same, and we call this number the \emph{dimension} of the lattice, denoted by $\dim(\Lambda)$. If ${\bf B}=(\bm b_1,\ldots,\bm b_n)\in\real^{m\times n}$ is a basis for a lattice $\Lambda=\CL({\bf B})$, the \emph{determinant} of the lattice is defined as $\det(\Lambda)= \sqrt{\det({\bf B}^T{\bf B})}$. It is invariant across all bases of~$\Lambda$.

\smallskip\noindent{\sc Successive minima.} For a given lattice $\Lambda$, we let $\lambda_1(\Lambda)$ denote the minimum Euclidean norm of vectors in $\Lambda \setminus \{\bm 0\}$. From  Minkowski's first theorem, we have  $\lambda_1(\Lambda)\le \sqrt{n}\cdot \det(\Lambda)^{1/n}$,
where~$n = \dim (\Lambda)$. More generally, for all $1\le i\le n$, we define the $i$-th \emph{minimum} as
\[\lambda_i(\Lambda)=\min_{\begin{array}{c}\bm v_1,\cdots,\bm v_i\in\Lambda\\\mbox{\,linearly independent}\end{array}}{\max_{j\le i}\norm{\bm v_j}}.\]
Minkowski's second theorem states that $\prod_{i\leq n}\lambda_i(\Lambda)\le\sqrt{n}^n \cdot \det(\Lambda)$.

\smallskip\noindent{\sc Sublattices.} Let $\Lambda\subseteq\real^n$ be a lattice. We say that $\Lambda^\prime$ is a \emph{sublattice} of $\Lambda$ if $\Lambda^\prime\subseteq\Lambda$ is a lattice as well. If $\Lambda^\prime$ is a sublattice of $\Lambda$ then $\lambda_i(\Lambda)\le \lambda_i(\Lambda^\prime)$ for $i\leq\dim(\Lambda^\prime)$. 
A sublattice $\Lambda^\prime$ of $\Lambda\subset\real^n$ is said to be \textit{primitive} if there exists a subspace $E$ of $\real^n$ such that $\Lambda^\prime=\Lambda\cap E$.

\smallskip\noindent{\sc Orthogonal lattices.} Given a full column rank matrix ${\bf A}\in\integer^{n\times k}$, the set $\CL^{\perp}({\bf A})$ defined in~\eqref{eq:orthlatt} forms a lattice, called the \emph{orthogonal lattice} of~${\bf A}$. We have  $\dim(\CL^{\perp}({\bf A}))= n- k$.
Using  
$\ker({\bf A}^T)^{\perp}=\im(A)$ and~\cite[Cor.\,p.\,328]{Schmidt1968} for primitive lattices   we have 
$$
\det(\CL^{\perp}({\bf A})) =\det(\integer^n\cap\ker({\bf A}^T)) =\det(\integer^n\cap\im({\bf A})),
$$
then $\CL({\bf A})\subseteq \integer^n
\cap\im({\bf A})$ and Hadamard's inequality lead to:
\begin{equation}
\label{eq:det-bound}
\det(\CL^{\perp}({\bf A})) \le\det(\CL({\bf A})) \le\norm{{\bf A}}^k.
\end{equation}

\smallskip\noindent{\sc LLL-reduced bases.}
The goal of lattice basis reduction is to find a basis with vectors as short and orthogonal to each other as possible. Among numerous lattice reduction notions, the LLL-reduction \cite{LenstraLenstraLovasz1982} is one of the most commonly used. Let $\frac{1}{4}<\delta< 1$. Let ${\bf B}=(\bm b_1, \ldots,\bm b_n)\in\real^{m\times n}$ be a basis of a lattice $\Lambda$. We say that ${\bf B}$ is \emph{size-reduced} if all Gram-Schmidt coefficients satisfy $\abs{\mu_{i j}}\leq \frac{1}{2}$. We say that ${\bf B}$ satisfies the \emph{Lov\'asz conditions} if for all~$i$ we have $\delta  \norm{\bm b_i^*}^2\le\norm{\bm b_{i+1}^*}^2 +\mu_{i+1, i}^2\norm{\bm b_i^*}^2$. If a basis ${\bf B}$ is size-reduced and satisfies the Lov\'asz conditions, then we say that ${\bf B}$ is \emph{LLL-reduced} (with respect to the parameter~$\delta$). If a basis ${\bf B}=(\bm b_1,\ldots,\bm b_n)$ of $\Lambda$ is LLL-reduced, then we have:
\[
\forall i < n \mc\norm{\bm b_i^*}^2\le \alpha\norm{\bm b_{i+1}^*}^2,
\]
\begin{equation}\label{eq:lll2}
\forall i\leq n \mc\norm{\bm b_i}^2\le \alpha^{i-1}\norm{\bm b_{i}^*}^2,
\end{equation}
\begin{equation}\label{eq:lll-gen}
\forall  i \leq j\leq n \mc\norm{\bm b_i}\leq \alpha^{\frac{n-1}{2}}\lambda_j(\Lambda),
\end{equation}
where~$\alpha =\frac{4}{4\delta -1}$.
In particular, we have $\norm{\bm b_1}\leq \alpha^{\frac{n-1}{2}}\lambda_1(\Lambda)$. In this paper, we use the original LLL
parameter $\delta = \frac{3}{4}$ and hence $\alpha = 2$.

\smallskip\noindent{\sc The LLL algorithm.} We now sketch the LLL algorithm. Although there exist many LLL variants in the literature, most of them follow the following structure. Step~\ref{algostep:LLLswap} is called an \emph{LLL swap}.

\begin{algorithm}
	\caption{(LLL)}
	\label{algo:LLL}
	\begin{algorithmic}[1]
		\REQUIRE A basis $(\bm b_i)_{i\le n}$ of a lattice $\varLambda\subseteq\integer^n$.
		\ENSURE An LLL-reduced basis of $\varLambda$.
		\STATE $i:=2$;
		\WHILE{$i\leq n$}
		\STATE\label{algostep:szrdc} Size-reduce $\bm
		b_{i}$ by $\bm b_{1},\cdots,\bm b_{i-1}$;
		\IF{Lov\'asz condition holds for $i$}
		\STATE Set $i:=i+1$;
		\ELSE
		\STATE\label{algostep:LLLswap} (LLL swap) Swap $\bm b_i$ and $\bm b_{i-1}$; set $i:=\max\{i-1,2\}$;
		\ENDIF
		\ENDWHILE
		\STATE Return $(\bm b_i)_{i\le n}$.
	\end{algorithmic}
\end{algorithm}

To clarify the structure of the algorithm, we omit some details in the above description, e.g., the update of Gram-Schmidt coefficients. From the sketch, we see that we can bound the running-time of LLL by the number of while loop iterations times the cost of each iteration. In fact, most cost bounds for LLL variants proceed via this simple argument. 
It was showed in~\cite{LenstraLenstraLovasz1982} that the number of LLL swaps is  $\CO(n^2\log\norm{{\bf B}})$. 
The following lemma plays a very important role in the analysis of LLL; see \cite{LenstraLenstraLovasz1982} for a proof.
\begin{lemma}
	\label{lem:montonicityGS}
	Let $\bf B$ and $\bf B'$ be bases after and before an LLL swap between
	$\bm b_i$ and $\bm b_{i+1}$. Then
	\begin{eqnarray*}
		\max\{\norm{\bm b_{i}^{\prime*}},\norm{\bm b_{i+1}^{\prime*}}\}
		& \leq &
		\max\{\norm{\bm b_{i}^*},\norm{\bm b_{i+1}^*}\},
		\\
		\min\{\norm{\bm b_{i}^{\prime*}},\norm{\bm b_{i+1}^{\prime*}}\}
		& \geq &
		\min\{\norm{\bm b_{i}^*},\norm{\bm b_{i+1}^*}\},
		\\
		\norm{\bm b_{i}^{*}}\cdot\norm{\bm b_{i+1}^{*}} & = & \norm{\bm b_{i}^{\prime*}}\cdot\norm{\bm b_{i+1}^{\prime*}}, \\
		\frac{\norm{\bm b_{i+1}^{\prime*}}}{\norm{\bm b_{i+1}^{*}}}
		=\frac{\norm{\bm b_{i}^{*}}}{\norm{\bm b_{i}^{\prime*}}}
		& \ge& \frac{2}{\sqrt{3}},
		\\
		\forall j \notin \{i,i+1\}&:& \bm b_{j}^{\prime*} =\bm b_{j}^{*}.
	\end{eqnarray*}
\end{lemma}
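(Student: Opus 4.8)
The plan is to reduce the whole statement to how the Gram--Schmidt orthogonalisation of a single $2\times 2$ block changes under the swap. Write $(\bm b_j)_{j\le n}$ for the basis before the swap, with Gram--Schmidt vectors $\bm b_j^*$ and coefficients $\mu_{j,\ell}$, and $(\bm b_j')_{j\le n}$ for the basis after, so that $\bm b_i'=\bm b_{i+1}$, $\bm b_{i+1}'=\bm b_i$ and $\bm b_j'=\bm b_j$ for $j\notin\{i,i+1\}$; write $\bm b_j^{\prime*}$ for the post-swap Gram--Schmidt vectors. The first thing I would record is that $\Span(\bm b_1',\dots,\bm b_j')=\Span(\bm b_1,\dots,\bm b_j)$ for every $j\neq i$: for $j<i$ the vector $\bm b_j'=\bm b_j$ is orthogonalised against the same subspace as before, and for $j>i+1$ both $\bm b_j'=\bm b_j$ and $\Span(\bm b_1',\dots,\bm b_{j-1}')=\Span(\bm b_1,\dots,\bm b_{j-1})$ are unchanged; by the definition of Gram--Schmidt this yields $\bm b_j^{\prime*}=\bm b_j^*$ for all $j\notin\{i,i+1\}$, which is the last assertion. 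Only the level-$i$ piece of the flag moves.

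Next I would identify the two vectors that do change. From $\bm b_i'=\bm b_{i+1}=\bm b_{i+1}^*+\sum_{\ell\le i}\mu_{i+1,\ell}\bm b_\ell^*$ and $\Span(\bm b_1,\dots,\bm b_{i-1})=\Span(\bm b_1^*,\dots,\bm b_{i-1}^*)$, projecting orthogonally to that subspace gives $\bm b_i^{\prime*}=\bm b_{i+1}^*+\mu_{i+1,i}\bm b_i^*$, hence, since $\bm b_{i+1}^*\perp\bm b_i^*$,
\[
\norm{\bm b_i^{\prime*}}^2=\norm{\bm b_{i+1}^*}^2+\mu_{i+1,i}^2\norm{\bm b_i^*}^2 .
\]
For the third assertion I would use that $\prod_{\ell\le r}\norm{\bm b_\ell^*}=\det(\CL(\bm b_1,\dots,\bm b_r))$ for every $r$, and that this determinant is invariant under unimodular changes of the first $r$ basis vectors; applying it with $r=i+1$ and $r=i-1$---both lattices being left unchanged by the swap---and dividing yields $\norm{\bm b_i^*}\cdot\norm{\bm b_{i+1}^*}=\norm{\bm b_i^{\prime*}}\cdot\norm{\bm b_{i+1}^{\prime*}}$.

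Finally I would bring in the reason the swap occurred, namely that the Lov\'asz condition fails at~$i$: $\norm{\bm b_{i+1}^*}^2+\mu_{i+1,i}^2\norm{\bm b_i^*}^2<\delta\norm{\bm b_i^*}^2$. With $\delta=\frac34$ and the displayed formula this reads $\norm{\bm b_i^{\prime*}}^2<\frac34\norm{\bm b_i^*}^2$, so $\norm{\bm b_i^*}/\norm{\bm b_i^{\prime*}}>2/\sqrt3$; dividing the product identity by $\norm{\bm b_{i+1}^*}\norm{\bm b_i^{\prime*}}$ then gives $\norm{\bm b_{i+1}^{\prime*}}/\norm{\bm b_{i+1}^*}=\norm{\bm b_i^*}/\norm{\bm b_i^{\prime*}}$, which is the fourth assertion. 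For the first two, I would note that $\norm{\bm b_i^{\prime*}}\ge\norm{\bm b_{i+1}^*}\ge\min\{\norm{\bm b_i^*},\norm{\bm b_{i+1}^*}\}$ and $\norm{\bm b_i^{\prime*}}<\norm{\bm b_i^*}\le\max\{\norm{\bm b_i^*},\norm{\bm b_{i+1}^*}\}$, so $\norm{\bm b_i^{\prime*}}$ lies between those two quantities with the upper bound strict; since $\norm{\bm b_{i+1}^{\prime*}}$ equals the now-fixed product $\norm{\bm b_i^*}\norm{\bm b_{i+1}^*}$ divided by $\norm{\bm b_i^{\prime*}}$, it lies between the same two quantities with the lower bound strict, and taking the maximum and the minimum over the pair $\{\norm{\bm b_i^{\prime*}},\norm{\bm b_{i+1}^{\prime*}}\}$ yields the claimed inequalities. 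I expect the only real content to be the second step---writing $\bm b_i^{\prime*}$ explicitly and showing that the product of the two Gram--Schmidt norms is a swap invariant; after that everything follows from the failed Lov\'asz inequality by elementary manipulation. The point to be careful with is that this inequality is strict, which is exactly what forces $\norm{\bm b_i^{\prime*}}$ strictly below $\norm{\bm b_i^*}$ (needed for the $\max$ bound) and the ratio strictly above $2/\sqrt3$.
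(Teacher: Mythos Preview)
Your proof is correct and follows the standard argument. Note, however, that the paper does not actually supply its own proof of this lemma: it states the result and writes ``see \cite{LenstraLenstraLovasz1982} for a proof.'' What you have written is essentially the classical argument from that reference---compute $\bm b_i^{\prime*}=\bm b_{i+1}^*+\mu_{i+1,i}\bm b_i^*$, use invariance of the partial determinants to get the product identity, and then read off everything else from the failed Lov\'asz condition. One cosmetic point: the lemma as printed says ``after and before,'' which appears to be a typo for ``before and after'' (compare the phrasing in Proposition~\ref{prop:mon}); your interpretation, with $\bf B$ the pre-swap basis and $\bf B'$ the post-swap basis, is the one consistent with the rest of the paper and with the direction of the inequalities.
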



\section{A new potential}
\label{sec:potential}

In this section, we introduce a variant of the classical LLL potential~
\begin{equation}
\label{eq:clsPot}
\Pi({\bf B})= \sum_{i=1}^{n-1} (n-i) \log \|{\bm b}_i^*\|
\end{equation} of a lattice basis~${\bf B}$. 
The variant we introduce is well-suited for analyzing the number of LLL swaps for the case that both the input and output bases have $k$ large Gram-Schmidt norms and $n-k$ small Gram-Schmidt norms, for some~$k<n$. This is for example the case
for the input basis as  \eqref{eq:CA}; see Section \ref{subsec:inputoutput}. The new potential is aimed at accurately measuring the progress made during the LLL execution, for such unbalanced bases.

\begin{definition}
	Let $k \leq n \leq m$ be positive integers and ${\bf B} \in \real^{m \times n}$ be
	full column rank. We let~$s_1 < \ldots < s_{n-k}$ be the indices of the $n-k$ smallest Gram-Schmidt norms of~${\bf B}$
	(using the lexicographical in case there are several $(n-k)$-th smallest Gram-Schmidt
	norms), and set~$S = \{ s_i\}_{i \leq n-k}$. We let~$\ell_1 < \ldots < \ell_{k}$ be the indices of the other $k$ Gram-Schmidt norms, and set~$L= \{\ell_j\}_{j \leq k}$.
	The \emph{$k$-th LLL potential} of~${\bf B}$ is defined as:
	\[
	\Pi_k({\bf B}) = \sum_{j=1}^{k-1}(k-j)\log\norm{\bm b_{\ell_j}^{*}} - \sum_{i=1}^{n-k} i\log\norm{\bm b_{s_i}^{*}} + \sum_{i=1}^{n-k}s_{i}.
	\]
\end{definition}

Note that for~$k=n$, we recover the classical potential~$\Pi$.
The rationale behind~$\Pi_k$ is that in some cases we know that the output basis
is made of vectors of very unbalanced Gram-Schmidt norms. As this basis is reduced, this means the first vectors have a small Gram-Schmidt norm, while the last vectors have large Gram-Schmidt norms. During the execution of LLL, such short and
large vectors do not interfere much. This is an unusual phenomenon: most often, long vectors are made shorter and short vectors are made longer, so that they are all balanced at the end. But this can happen if the long vectors are rather orthogonal to the short ones. When this is the case, LLL actually runs faster than usual, because it merely ``sorts'' the short vectors and the long vectors, without making them interact to create shorter vectors. Of course, it can do more intense computations among the short vectors and among the long vectors.
Unbalancedness of Gram-Schmidt norms is not captured by the classical potential, but it is with~$\Pi_k$. In particular, the new potential~$\Pi_k$ allows to not ``pay''
for the output unbalancedness in the analysis of the number of LLL swaps.

Similarly
to the classical potential, the $k$-th LLL potential monotonically decreases with the number of LLL swaps. More precisely, we have the following

\begin{proposition}\label{prop:mon}
	Let~${\bf B}$ and~${\bf B}'$ be the current $n$-dimensional lattice bases before and after an LLL swap.
	Then for any~$k \leq n$, we have $\Pi_k({\bf B}) - \Pi_k({\bf B}')
	\geq \log (2/\sqrt{3})$.
\end{proposition}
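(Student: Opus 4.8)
The plan is to reduce the statement to a short case analysis driven by Lemma~\ref{lem:montonicityGS}. Write $g_p:=\norm{\bm b_p^*}$. An LLL swap between $\bm b_i$ and $\bm b_{i+1}$ (so the Lov\'asz condition fails at~$i$ and $g_i>g_{i+1}$) leaves $g_p$ unchanged for $p\notin\{i,i+1\}$ and replaces $\sigma:=g_i$, $\tau:=g_{i+1}$ by $\sigma':=g_i'$, $\tau':=g_{i+1}'$ with $\sigma'\tau'=\sigma\tau$, $\sigma/\sigma'=\tau'/\tau\ge 2/\sqrt3$ and, crucially, $\tau\le\sigma',\tau'\le\sigma$, so that both new norms stay inside the ``window'' $[\tau,\sigma]$. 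It is convenient to regroup the first two sums of the definition as logarithms of prefix products along $L$ and suffix products along $S$: with $D_j:=\prod_{t\le j}g_{\ell_t}$ and $E_t:=\prod_{i\ge t}g_{s_i}$ one has $\sum_{j=1}^{k-1}(k-j)\log g_{\ell_j}=\sum_{j=1}^{k-1}\log D_j$ and $\sum_{i=1}^{n-k}i\log g_{s_i}=\sum_{t=1}^{n-k}\log E_t$, hence $\Pi_k({\bf B})=\sum_{j=1}^{k-1}\log D_j-\sum_{t=1}^{n-k}\log E_t+\sum_{i=1}^{n-k}s_i$.

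Let $S,L$ and $S',L'$ be the small/large index sets of ${\bf B}$ and ${\bf B}'$. Since $\sigma>\tau$, with respect to $S$ the pair $\{i,i+1\}$ falls into exactly one of: (A) both in $L$; (B) both in $S$; (C) $i\in L$, $i+1\in S$. In cases (A) and (B) the two updated norms remain on the same side of the $(n-k)$-th smallest Gram-Schmidt norm of ${\bf B}$, so $S'=S$ and $L'=L$ (equalities of Gram-Schmidt norms are dealt with by the lexicographic rule, and only cause complications of the kind treated in case~(C)). Because $\sigma'\tau'=\sigma\tau$, exactly one $D_j$ changes in case~(A) --- the one whose product contains $g_i$ but not $g_{i+1}$ --- or exactly one $E_t$ changes in case~(B), and the net effect is that $\Pi_k$ decreases by $\log(\sigma/\sigma')$; this gives $\Pi_k({\bf B})-\Pi_k({\bf B}')=\log(\sigma/\sigma')\ge\log(2/\sqrt3)$, exactly mirroring the classical argument. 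In case~(C) with $S'=S$, writing $i=\ell_a$ and $i+1=s_c$ and bookkeeping the finitely many affected $D_j$, $E_t$ together with the single affected term of $\sum s_i$, I get $\Pi_k({\bf B})-\Pi_k({\bf B}')=(k-a+c)\log(\sigma/\sigma')$, which is at least $\log(\sigma/\sigma')\ge\log(2/\sqrt3)$ since $k-a\ge 0$ and $c\ge 1$.

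The genuinely delicate case is (C) with $S'\ne S$. Since only $g_i$ decreased and $g_{i+1}$ increased and both new values lie in $[\tau,\sigma]$, the only indices that can switch sides are those whose Gram-Schmidt norm lies in $[\tau,\sigma]$, and the number of them kept on the small side is preserved --- the repartition is thus a rotation inside this window. Substituting $S'$ for $S$ in the formula and expanding, I expect $\Pi_k({\bf B})-\Pi_k({\bf B}')$ to decompose into the $(k-a+c)\log(\sigma/\sigma')$ contribution of the previous paragraph, a favorable integer change in $\sum s_i$ (arising because $i$, of index one less than $i+1$, takes over a slot on the small side), and logarithmic terms of the form $\log(g_j/\sigma')$ and $\log(\tau'/g_j)$ over the shuffled window indices $j$, each nonnegative exactly because $\sigma'\ge\tau$ and $\tau'\le\sigma$. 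Summing these, the decrease is at least $1\ge\log(2/\sqrt3)$. This last case is the main obstacle: the work lies in checking that every logarithmic term produced by the rotated indices has the favorable sign (which is where the window property of Lemma~\ref{lem:montonicityGS} is essential) and that the lexicographic tie-breaking never forces an adverse term into $\sum s_i$ --- cleanly disposing of these equality cases (for instance by a limiting argument, or by following the lex order through the rotation) being where most of the remaining effort goes.
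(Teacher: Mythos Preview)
Your cases (A), (B), and (C) with $S'=S$ reproduce the paper's Cases~1--3 exactly; the prefix/suffix-product reformulation is a tidy way to see why a single term survives, and your $(k-a+c)\log(\sigma/\sigma')$ is the paper's $(k-j_0+i_0)\log(\tau'/\tau)$.

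For (C) with $S'\ne S$ the paper is actually \emph{less} careful than you are. It treats only the sub-case $S'=(S\setminus\{i+1\})\cup\{i\}$, where a direct computation gives
\[
\Pi_k({\bf B})-\Pi_k({\bf B}')=(k-a)\log\frac{\sigma}{\tau'}+c\,\log\frac{\sigma'}{\tau}+1 \ \ge\ 1,
\]
both logarithms being nonnegative by the window property. But this sub-case is not exhaustive: with $n=4$, $k=2$, Gram--Schmidt norms $(10,\,1,\,5.5,\,6)$ and a swap at position~$1$ giving $(\sigma',\tau')=(2,5)$ (take $\mu_{2,1}^2=0.03$), one has $S=\{2,3\}$ yet $S'=\{1,2\}$, so index~$3$ migrates from $S$ to $L'$ without its norm changing. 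Your ``rotation inside the window'' is precisely this phenomenon, and the paper's four cases do not cover it.

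That said, your own handling of this case is only a plan. The decomposition you ``expect'' is never written down, the sign claims do not follow from what you cite (for an index $p$ that leaves $S$ you need $g_p\ge\min(\sigma',\tau')$, not $\sigma'\ge\tau$), and the change in $\sum_i s_i$ is not simply~$1$ in general (it is~$2$ in the example above). To finish, you must explicitly track, for every index whose membership flips, the net effect on the two weighted log-sums and on $\sum_i s_i$, and verify the total is nonnegative; the window inclusion $\sigma',\tau'\in[\tau,\sigma]$ is indeed the key, but the bookkeeping has to be carried out. Until that is done, your proposal --- and, as it happens, the paper's own argument --- share the same gap.
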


\begin{proof}
	Recall that  $S$ and $L$ are  the index sets for the $n-k$ Gram-Schmidt norms and the other $k$ Gram-Schmidt norms for the lattice basis ${\bf B}$. We define $S'$ and $L'$ for ${\bf B}'$ similarly.
	
	Suppose that this LLL swap occurs between $\bm b_\kappa$ and $\bm b_{\kappa+1}$. Then we must be in one of the following four cases.
	
	\medskip
	\noindent
	Case~1: $\kappa \in S$ and $\kappa+1 \in S$.
	
	Let $i_0\leq n-k$ such that $\kappa = s_{i_0}$ and
	$\kappa + 1 = s_{i_0+1}$.
	From Lemma~\ref{lem:montonicityGS}, we
	have $S' =    S$ and $L' = L$, and hence $\kappa = s_{i_0}'$ and $\kappa + 1 = s_{i_0+1}'$. For the other indices, we have $s_i' = s_i$ (for $i \leq n-k$) and $\ell_j' = \ell_j$ (for $j \leq k$). Then
	\begin{eqnarray*}
		\Pi_k({\bf B} ) - \Pi_k({\bf B}')  &=& \sum\limits_{j=1}^k(k-j)\log\frac{\norm{\bm b_{\ell_j}^{*}}}{\norm{\bm b_{\ell_j'}'}} + \sum\limits_{i=1}^{n-k}i\log\frac{\norm{\bm b_{s_i'}^{\prime*}}}{\norm{\bm b_{s_i}^{*}}} \\
		&& \text{\hspace*{0.4cm}}+ \sum\limits_{i=1}^{n-k}\left(s_{i} - s_{i}'\right)\\
		&=& i_0\log\frac{\norm{\bm b_{s_{i_0}'}^{\prime*}}}{\norm{\bm b_{s_{i_0}}^{*}}} + (i_0+1)\log\frac{\norm{\bm b_{s_{i_0+1}'}^{\prime*}}}{\norm{\bm b_{s_{i_0+1}}^{*}}}
		\\
		&=& \log\frac{\norm{\bm b_{\kappa+1}^{\prime*}}}{\norm{\bm b_{\kappa+1}^{*}}}\ge \log\left(\frac{2}{\sqrt{3}}\right),
	\end{eqnarray*}
	where the last inequality follows from Lemma~\ref{lem:montonicityGS}.
	
	\medskip
	\noindent
	Case 2: $\kappa \in L$ and $\kappa + 1 \in L$.
	
	The treatment of Case~1 can be adapted readily.

	\medskip
	\noindent
	Case 3: $\kappa\in L$, $\kappa+1\in S$, $S' = S$ and $L' = L$.
	
	Let $j_0\leq k$ such that $\kappa = \ell_{j_0}$, and $i_0\leq n-k$ such that $\kappa+1= s_{i_0}$. Then we have $\kappa = \ell_{j_0}'$
	and $\kappa + 1 = s_{i_0}'$. For the other indices,
	we have $s_i^{\prime} = s_i^{(t)}$ (for $i \leq n-k$) and $\ell_j^{\prime} = \ell_j^{(t)}$ (for $j \leq k$). Thus
	\begin{eqnarray*}
		\Pi_k({\bf B}) - \Pi_k({\bf B}^{\prime})  &=& \sum\limits_{j=1}^k(k-j)\log\frac{\norm{\bm b_{\ell_j}^{*}}}{\norm{\bm b_{\ell_j'}^{\prime*}}} + \sum\limits_{i=1}^{n-k}i\log\frac{\norm{\bm b_{s_i'}^{\prime*}}}{\norm{\bm b_{s_i}^{*}}} \\
		&& \text{\hspace*{0.4cm}}+ \sum\limits_{i=1}^{n-k}\left(s_{i} - s_{i}^{\prime}\right)\\
		&=& (k-j_0)\log\frac{\norm{\bm b_{\ell_{j_0}}^{*}}}{\norm{\bm b_{\ell_{j_0}'}^{\prime*}}} + i_0\log\frac{\norm{\bm b_{s_{i_0}'}^{\prime*}}}{\norm{\bm b_{s_{i_0}}^{*}}}\\
		&=& (k-j_0+i_0)\log\frac{\norm{\bm b_{\kappa+1}^{\prime*}}}{\norm{\bm b_{\kappa+1}^{*}}}\ge \log\left(\frac{2}{\sqrt{3}}\right),
	\end{eqnarray*}
	where the last inequality follows from Lemma~\ref{lem:montonicityGS} and the fact that~$k-j_0+i_0\geq 1$.

	\medskip
	\noindent
	Case 4: $\kappa\in L$, $\kappa+1\in S$, $S^{\prime} = S\cup\{\kappa\}\setminus\{\kappa+1\}$ and $L^{\prime} = L\cup\{\kappa+1\}\setminus\{\kappa\}$.
	
	Let $j_0\leq k$ such that $\kappa = \ell_{j_0}$, and $i_0\leq n-k$ such that $\kappa+1= s_{i_0}$.
	Then $\kappa = s_{i_0}^{\prime}$ and $\kappa + 1 = \ell_{j_0}^{\prime}$. For other indices, we have $s_i^{\prime} = s_i$ (for $i \leq n-k$) and $\ell_j^{\prime} = \ell_j$ (for $j \leq k$). Then
	\begin{eqnarray*}
		\Pi_k({\bf B}) - \Pi_k({\bf B}^{\prime})  &=& \sum\limits_{j=1}^k(k-j)\log\frac{\norm{\bm b_{\ell_j}^{*}}}{\norm{\bm b_{\ell_j'}^{\prime*}}} + \sum\limits_{i=1}^{n-k}i\log\frac{\norm{\bm b_{s_i'}^{\prime*}}}{\norm{\bm b_{s_i}^{*}}} \\
		&& \text{\hspace*{0.4cm}}+ \sum\limits_{i=1}^{n-k}\left(s_{i} - s_{i}^{\prime}\right)\\
		& = &(k-j_0)\log\frac{\norm{\bm b_{\ell_{j_0}}^{*}}}{\norm{\bm b_{\ell_{j_0}'}^{\prime*}}} + i_0\log\frac{\norm{\bm b_{s_{i_0}'}^{\prime*}}}{\norm{\bm b_{s_{i_0}}^{*}}} + 1 \\
		&= &(k-j_0)\log\frac{\norm{\bm b_{\kappa}^{*}}}{\norm{\bm b_{\kappa+1}^{\prime*}}}
		+ i_0 \log\frac{\norm{\bm b_{\kappa}^{\prime*}}}{\norm{\bm b_{\kappa+1}^{*}}} + 1 \\
		&\ge& 1,
	\end{eqnarray*}
	where the last inequality follows from Lemma~\ref{lem:montonicityGS}.
	The observation that~$1 \geq \log(2/\sqrt{3})$ allows to complete the proof.
\end{proof}

With the above property of the $k$-th LLL potential, we can bound the number of LLL swaps that LLL performs.

\begin{theorem}
	\label{th:pot}
	Let ${\bf B}\in\real^{m\times n}$ be a full column rank matrix.
	Let~${\bf B}'$ be the basis returned by the LLL algorithm when given~${\bf B}$ as input.  Then the number of swaps that LLL performs is no greater than
	\[
	\min_{1\le k\le n}\frac{\Pi_k({\bf B}) - \Pi_k({\bf B}')}{\log\left(\frac{2}{\sqrt{3}}\right)}.
	\]
\end{theorem}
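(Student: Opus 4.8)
The plan is to combine Proposition~\ref{prop:mon} with the observation that Algorithm~\ref{algo:LLL} performs only three kinds of steps — size-reductions (Step~\ref{algostep:szrdc}), index updates, and LLL swaps (Step~\ref{algostep:LLLswap}) — and that, for every fixed~$k$, the function~$\Pi_k$ is insensitive to all of these except the swaps, at each of which it drops by a definite amount.

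First I would record that size-reduction does not affect~$\Pi_k$. Step~\ref{algostep:szrdc} replaces~$\bm b_i$ by~$\bm b_i - \sum_{j<i} c_j \bm b_j$ for some integers~$c_j$; this leaves every Gram-Schmidt vector~$\bm b_\ell^*$, and hence every Gram-Schmidt norm, unchanged. Consequently the index sets~$S$ and~$L$ in the definition of~$\Pi_k$ — which depend only on the ordering of the Gram-Schmidt norms together with the fixed lexicographic tie-break — are unchanged, and so is~$\Pi_k$. The index-update steps ($i := i+1$ and $i := \max\{i-1,2\}$) do not modify the basis at all, and therefore also leave~$\Pi_k$ unchanged.

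Next, fix $k \in \{1,\ldots,n\}$ and let~$N$ be the total number of LLL swaps performed during the execution on input~${\bf B}$; this is finite since LLL terminates. Follow the value of~$\Pi_k$ along the run: by the previous paragraph it is constant between consecutive swaps, and by Proposition~\ref{prop:mon} it decreases by at least~$\log(2/\sqrt{3})$ at each swap. Since~$\Pi_k$ equals~$\Pi_k({\bf B})$ before the first swap and~$\Pi_k({\bf B}')$ after the last one (again by the size-reduction invariance, applied to the size-reductions occurring before the first swap and after the last swap), summing the per-swap decreases gives
\[
\Pi_k({\bf B}') \;\le\; \Pi_k({\bf B}) - N \log\!\left(\tfrac{2}{\sqrt{3}}\right),
\]
that is, $N \le \bigl(\Pi_k({\bf B}) - \Pi_k({\bf B}')\bigr)/\log(2/\sqrt{3})$. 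As~$N$ is independent of~$k$ whereas the right-hand side depends on~$k$, we may take the minimum over~$1 \le k \le n$, which yields the stated bound.

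There is no serious obstacle here; the one point I would state explicitly is the invariance of the partition~$(S,L)$, and hence of~$\Pi_k$, under size-reduction — this needs a word because~$\Pi_k$ is defined through a data-dependent grouping of the indices rather than through the vectors in their current positions. Everything else is a direct telescoping of Proposition~\ref{prop:mon}.
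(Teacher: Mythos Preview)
Your proposal is correct and matches the paper's approach: the paper states Theorem~\ref{th:pot} without an explicit proof, treating it as an immediate consequence of Proposition~\ref{prop:mon}, and your argument spells out precisely the intended telescoping together with the observation that size-reduction leaves the Gram-Schmidt norms (hence~$\Pi_k$) unchanged. The only addition you make beyond what the paper leaves implicit is the remark about the invariance of the partition~$(S,L)$ under size-reduction, which is a helpful clarification rather than a different idea.
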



\section{Orthogonal lattices}
\label{sec:knapsack}
As an application of the $k$-th LLL potential $\Pi_k$, we consider the problem of computing an LLL-reduced basis of an orthogonal lattice. Let~${\bf A} \in \integer^{n \times k}$ with~$n \geq k$. We aim at computing an LLL-reduced basis
of the orthogonal lattice~$\mathcal{L}^{\perp}({\bf A})$, by LLL-reducing~$\Ext_K({\bf A})$ (as defined in~\eqref{eq:CA}), for a sufficiently large integer~$K$.

In Subsection~\ref{subsec:correct}, we provide a sufficient
condition on the scaling parameter~$K$ so that a LLL-reduced basis of~$\mathcal{L}^{\perp}({\bf A})$ can be extracted from a LLL-reduced basis of~$\mathcal{L}(\Ext_K({\bf A}))$. For such a sufficiently large $K$, we study the Gram-Schmidt orthogonalizations of the input and output bases of the LLL call to~$\Ext_K({\bf A})$ in Subsection~\ref{subsec:inputoutput}, and  we provide a bound on the number of required
LLL swaps which is independent of $K$ in Subsection~\ref{subsec:terminate}.

\subsection{Correctness}
\label{subsec:correct}

For~$n \geq k$, we define $\sigma_{n,k}$ as the map that embeds $\real^n$ into $\real^{n+k}$ by adding $0$'s in the first $k$ coordinates.
\[
\begin{array}{rcl}
\sigma_{n,k}:\,\,\real^n&\rightarrow&\real^{n+k}\\
(x_1,\cdots,x_n)^T&\mapsto& (\underbrace {0,\cdots ,0 }_k, \underbrace{x_1,\cdots,x_n}_n)^T.
\end{array}
\]
We also define $\delta_{n,k}$ as the map that erases the first $k$ coordinates
of a vector in~$\real^{n+k}$.
\[
\begin{array}{rcl}	
\delta_{n,k}:\,\,\real^{n+k}&\rightarrow&\real^{n}\\
(x_1,\cdots, x_k,x_{k+1},\cdots,x_{k+n})^T&\mapsto& (x_{k+1},\cdots,x_{k+n})^T.
\end{array}
\]
We extend these functions to matrices in the canonical way.
The following proposition is adapted from~\cite[Theorem~4]{NguyenStern1997} (see also~\cite[Proposition~2.24]{Nguyen1999}). 
It shows that if~$K$ is sufficiently large, then calling the LLL algorithm on $\Ext_K({\bf A})$ provides an LLL-reduced basis
of~$\mathcal{L}^{\perp}({\bf A})$.

\begin{proposition}\label{prop:cor}
	Let ${\bf A}\in\integer^{n\times k}$ be full column rank and ${\bf B} = \Ext_K({\bf A})$.
	If~${\bf B}'$ is  an LLL-reduced basis of~$\mathcal{L}({\bf B})$ and
	\begin{equation}
	\label{eq:chooseC}
	K>2^{\frac{n-1}{2}} \cdot \lambda_{n-k}(\CL^\perp({\bf A})),
	\end{equation}
	then  $\delta_{n,k}({\bm b}_1'), \cdots, \delta_{n,k}({\bm b}_{n-k}')$ is an LLL-reduced basis of
	$\mathcal{L}^{\perp}({\bf A})$.
\end{proposition}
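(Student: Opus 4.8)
The plan is to exploit the block structure of ${\bf B} = \Ext_K({\bf A})$. Every vector of $\CL({\bf B})$ has the form $(K{\bf A}^T\bm z, \bm z)^T$ with $\bm z\in\integer^n$, so its squared norm equals $K^2\norm{{\bf A}^T\bm z}^2 + \norm{\bm z}^2$. Since ${\bf A}^T\bm z\in\integer^k$, either ${\bf A}^T\bm z = \bm 0$, in which case $\bm z\in\CL^\perp({\bf A})$ and the vector equals $\sigma_{n,k}(\bm z)$, or $\norm{{\bf A}^T\bm z}\ge 1$, which forces the vector to have norm $\ge K$. In particular $M := \sigma_{n,k}(\CL^\perp({\bf A}))$ is a sublattice of $\CL({\bf B})$, of dimension $n-k$ and isometric to $\CL^\perp({\bf A})$ through the norm-preserving maps $\sigma_{n,k}$ and $\delta_{n,k}$, while every vector of $\CL({\bf B})\setminus M$ has norm at least $K$.

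First I would locate the first $n-k$ vectors of the reduced basis ${\bf B}'$. Applying~\eqref{eq:lll-gen} to $\Lambda = \CL({\bf B})$ (which has dimension $n$, so $\alpha^{\frac{n-1}{2}} = 2^{\frac{n-1}{2}}$) with $j = n-k$ gives $\norm{\bm b_i'}\le 2^{\frac{n-1}{2}}\lambda_{n-k}(\CL({\bf B}))$ for every $i\le n-k$. Since $M$ is an $(n-k)$-dimensional sublattice of $\CL({\bf B})$, the comparison of successive minima for sublattices recalled in Section~\ref{sec:pre} yields $\lambda_{n-k}(\CL({\bf B}))\le\lambda_{n-k}(M) = \lambda_{n-k}(\CL^\perp({\bf A}))$. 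Combined with hypothesis~\eqref{eq:chooseC}, this gives $\norm{\bm b_i'} < K$ for all $i\le n-k$, and the norm dichotomy above then forces $\bm b_i'\in M$, i.e.\ $\bm z_i := \delta_{n,k}(\bm b_i')\in\CL^\perp({\bf A})$ for $i\le n-k$.

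Next I would check that $(\bm z_1,\ldots,\bm z_{n-k})$ is a basis of $\CL^\perp({\bf A})$. Linear independence is inherited from $\bm b_1',\ldots,\bm b_{n-k}'$ since $\delta_{n,k}$ is injective on $M$. For the spanning property, I would take $\bm z\in\CL^\perp({\bf A})$ and write $\sigma_{n,k}(\bm z) = \sum_{j=1}^n c_j\bm b_j'$ with $c_j\in\integer$; then $\sum_{j>n-k}c_j\bm b_j' = \sigma_{n,k}(\bm z) - \sum_{j\le n-k}c_j\bm b_j'$ lies in $\Span_\real(M) = \Span_\real(\bm b_1',\ldots,\bm b_{n-k}')$, hence it is simultaneously a real combination of $\bm b_{n-k+1}',\ldots,\bm b_n'$ and of $\bm b_1',\ldots,\bm b_{n-k}'$; linear independence of the full basis ${\bf B}'$ forces $c_j = 0$ for $j>n-k$, so that $\bm z = \sum_{j\le n-k}c_j\bm z_j$.

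Finally, LLL-reducedness of $(\bm z_1,\ldots,\bm z_{n-k})$ transfers from ${\bf B}'$: the vectors $\bm b_1',\ldots,\bm b_{n-k}'\in M$ all have zero first $k$ coordinates, so $\delta_{n,k}$ acts on them as an isometry, and therefore their Gram-Schmidt vectors, Gram-Schmidt norms, and Gram-Schmidt coefficients coincide with those of $\bm z_1,\ldots,\bm z_{n-k}$; size-reducedness and the Lov\'asz conditions at indices $1,\ldots,n-k-1$ then carry over verbatim. The only step with genuine content is the second paragraph, where one must combine the LLL length bound~\eqref{eq:lll-gen} for the leading basis vectors, the sublattice inequality on successive minima, and the threshold~\eqref{eq:chooseC} to confine $\bm b_1',\ldots,\bm b_{n-k}'$ to $M$; the rest is routine once $\delta_{n,k}$ is recognized as an isometry on the relevant subspace.
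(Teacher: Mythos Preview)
Your proof is correct and follows essentially the same approach as the paper's own proof: both use the sublattice inequality $\lambda_{n-k}(\CL({\bf B}))\le\lambda_{n-k}(\CL^\perp({\bf A}))$ together with~\eqref{eq:lll-gen} and the threshold~\eqref{eq:chooseC} to force the first $n-k$ reduced vectors into $\sigma_{n,k}(\CL^\perp({\bf A}))$, then verify that their images under $\delta_{n,k}$ form an LLL-reduced basis. Your presentation is slightly more explicit in setting up the norm dichotomy and the isometry argument, and your spanning argument unwinds the paper's ``real span plus lattice membership'' step into a direct linear-independence computation, but the underlying ideas are the same.
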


\begin{proof}
	As ${\bf A}\in\integer^{n\times k}$ is full column rank, we have
	$\dim(\mathcal{L}^{\perp}({\bf A}))=n-k$. For any basis ${\bf C} \in \integer^{n \times (n-k)}$ of $\mathcal{L}^{\perp}({\bf A})$, we have
	$\sigma_{n,k}({\bf C})={\bf B} \cdot {\bf C}$, and hence the lattice
	$\sigma_{n,k}(\mathcal{L}^\perp({\bf A}))$ is a sublattice of $\mathcal{L}({\bf B})$. This implies that, for all~$i \leq n-k$,
	\[
	\lambda_i(\mathcal{L}({\bf B})) \leq
	\lambda_i(\sigma_{n,k}(\mathcal{L}^\perp({\bf A})))
	= \lambda_i(\mathcal{L}^\perp({\bf A})).
	\]
	It follows from~\eqref{eq:lll-gen} that, for all~$i \leq n-k$,
	\begin{equation}\label{eq:C-cor}
	\norm{\bm b_i'}^2\le 2^{{n-1}} \cdot \lambda_{n-k}^2(\mathcal{L}({\bf B}))\le 2^{{n-1}} \cdot \lambda_{n-k}^2(\mathcal{L}^{\perp}({\bf A})).
	\end{equation}
	
	We now assume (by contradiction) that $\delta_{n,k}(\bm b_{i}')\notin\mathcal{L}^{\perp}({\bf A})$ for some $i \leq n-k$.
	Note that
	\[\bm b_{i}' = {\bf B} \cdot \delta_{n,k}(\bm b_{i}') = (K\cdot  \delta_{n,k}(\bm b_{i'})^T\cdot {\bf A}\,|\,\delta_{n,k}(\bm b_{i}')^T)^T.\]
	As the subvector $K\cdot \delta_{n,k}(\bm b_{i}')^T\cdot {\bf A}$ is non-zero, and 
	using the assumption on~$K$, we obtain that
	\[
	\begin{split}
	\norm{\bm b_{i}'}^2 & = \norm{K\cdot \delta_{n,k}(\bm b_{i}')^T\cdot {\bf A}}^2 + \norm{\delta_{n,k}(\bm b_{i}')}^2\\
	&\ge K^2  > 2^{{n-1}} \cdot \lambda_{n-k}^2(\mathcal{L}^{\perp}({\bf A})),
	\end{split}
	\]
	which contradicts~\eqref{eq:C-cor}.
	
	From the above, we obtain that $\delta_{n,k}({\bm b}_1'), \cdots, \delta_{n,k}({\bm b}_{n-k}')$ are linearly independent vectors in~$\mathcal{L}^{\perp}({\bf A})$. They actually form a basis of~$\mathcal{L}^{\perp}({\bf A})$. To see this, consider an arbitrary vector~${\bm c} \in  \mathcal{L}^{\perp}({\bf A})$. The vector~${\bf B} \cdot {\bm c}$ belongs to the
	real span of~${\bm b}_1', \cdots, {\bm b}_{n-k}'$ and
	to~$\mathcal{L}({\bf B})$. As~${\bf B}'$ is a basis of~$\mathcal{L}({\bf B})$, vector~${\bf B} \cdot {\bm c}$ is an integer combination of~${\bm b}_1', \cdots, {\bm b}_{n-k}'$ and vector~${\bm c}$ is an integer combination of~$\delta_{n,k}({\bm b}_1'), \cdots, \delta_{n,k}({\bm b}_{n-k}')$.
	
	Since ${\bf B}'$ is LLL-reduced and the first $k$ coordinates of each ${\bm b}_i'$ ($i\le n-k$) are~$0$, we obtain that $\delta_{n,k}({\bm b}_1'), \cdots, \delta_{n,k}({\bm b}_{n-k}')$ form an LLL-reduced basis of~$\mathcal{L}^\perp({\bf A})$.
\end{proof}

To make this condition on $K$ effective, we use some upper bounds on $\lambda_{n-k}(\mathcal{L}^{\perp}({\bf A}))$. For instance, from Minkowski's second theorem, we have
\[
\lambda_{n-k}(\mathcal{L}^{\perp}({\bf A}))\le (n-k)^{\frac{n-k}{2}} \cdot \det(\mathcal{L}^{\perp}({\bf A}))\le(n-k)^{\frac{n-k}{2}} \cdot\norm{{\bf A}}^k.
\]
Hence
\begin{equation}
\label{eq:effectiveK}
K>2^{\frac{n-1}{2}}\cdot(n-k)^{\frac{n-k}{2}} \cdot\norm{{\bf A}}^k
\end{equation}
suffices to guarantee that \eqref{eq:chooseC} holds.

The bound
in~\eqref{eq:effectiveK} can be very loose. Indeed, in many cases, we expect the minima of~$\mathcal{L}^{\perp}({\bf A})$ to be balanced, and if they are so, then
the following bound would suffice
\begin{equation}
\label{eq:effectiveK2}
K>2^{\Omega(n)} \cdot \norm{{\bf A}}^{\frac{k}{n-k}}.
\end{equation}

For such a scaling paramter $K$, according to Proposition \ref{prop:cor}, after termination of the LLL call with $\Ext_K(A)$ as its input, the output matrix must be of the following form:
\begin{equation}\label{eq:term}
\begin{pmatrix}
\bm 0 & {\bf M} \\
{\bf C} & {\bf N} \\
\end{pmatrix}\mc
\end{equation}
where the columns of ${\bf C}\in\integer^{n\times(n-k)}$ form an LLL-reduced basis of the lattice $\CL^\perp({\bf A})$. 
\footnote{In fact, the resulting matrix gives more information than an LLL-reduced basis of  $\CL^\perp({\bf A})$. For  instance, the columns of $\frac{1}{K}\cdot \mathbf{M}$ form a basis of the lattice generated by the rows of $\mathbf{A}$.}

\subsection{On the LLL input and output bases}
\label{subsec:inputoutput}

To bound the number of LLL swaps, we first investigate the matrix ${\bf B} = \Ext_K({\bf A})$ given as input to the LLL algorithm, and the output matrix ${\bf B}'$.

Intuitively, from the shape of ${\bf B}$ and the fact that ${\bf A}$ is full rank, there must be $k$ Gram-Schmidt norms of ${\bf B}$ that are ``impacted'' by the scaling parameter~$K$, and hence have large magnitude, while other $n-k$ Gram-Schmidt norms of ${\bf B}$ should be of small magnitude.

On the other hand, recall that $\mathbf{B}'$ is of the form \eqref{eq:term}. Since only the first $k$ coordinates are related to the scaling parameter~$K$, the submatrix ${\bf C}$ is ``independent'' of~$K$. Thus, each of $\norm{\bm b_1'^{*}}\mc\cdots\mc\norm{\bm b_{n-k}'^{*}}$ should be relatively small (for a sufficiently large~$K$), while each of $\norm{\bm b_{n-k+1}'^{*}}\mc\cdots\mc\norm{\bm b_n'^{*}}$ is ``impacted'' by~$K$, and hence with large magnitude.
The following result formalizes this discussion.

\begin{proposition}\label{prop:SG}
	Let ${\bf A}\in\integer^{n\times k}$ be of full column rank and ${\bf B}'$ the output basis of LLL with ${\bf B} = \Ext_K({\bf A})$ as input. If the scaling parameter $K \in \integer$ satisfies~\eqref{eq:chooseC}, then for the output matrix~${\bf B}'$ we have
	\[
	\begin{array}{lll}
	\forall i\leq n-k, & \forall j>n-k,& \ \|{\bm b}_i'^*\| < \|{\bm b}_j'^*\|.
	\end{array}
	\]

\end{proposition}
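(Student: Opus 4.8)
The plan is to estimate the $n-k$ ``small'' Gram--Schmidt norms of~${\bf B}'$ and the $k$ ``large'' ones separately, and then to show that the gap between the two groups is forced by the hypothesis~\eqref{eq:chooseC}.

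First I would handle the small ones. Since $K$ satisfies~\eqref{eq:chooseC}, Proposition~\ref{prop:cor} tells us that for $i\le n-k$ the first $k$ coordinates of $\bm b_i'$ vanish and that the vectors $\bm c_i:=\delta_{n,k}(\bm b_i')$ form an LLL-reduced basis of the $(n-k)$-dimensional lattice $\CL^\perp({\bf A})$. Because $\sigma_{n,k}$ is an isometric embedding and $\bm b_i'=\sigma_{n,k}(\bm c_i)$, we get $\norm{\bm b_i'^{*}}=\norm{\bm c_i^{*}}\le\norm{\bm c_i}$, and then~\eqref{eq:lll-gen} applied inside $\CL^\perp({\bf A})$ (with $j=n-k$) gives $\norm{\bm b_i'^{*}}\le 2^{(n-k-1)/2}\lambda_{n-k}(\CL^\perp({\bf A}))$ for every $i\le n-k$. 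I want to stress that the weaker estimate $\norm{\bm b_i'}\le 2^{(n-1)/2}\lambda_{n-k}(\CL^\perp({\bf A}))$ obtained directly from~\eqref{eq:C-cor} would not be sufficient below; the sharper exponent $(n-k-1)/2$ is essential.

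The heart of the argument is the lower bound on the large norms, and in particular the claim $\norm{\bm b_{n-k+1}'^{*}}\ge K$. Since ${\bf B}'={\bf B}\,{\bf U}$ for a unimodular ${\bf U}$, write the columns as $\bm b_j'={\bf B}\bm u_j=(K{\bf A}^T\bm u_j\mid\bm u_j)^T$ with $\bm u_j=\delta_{n,k}(\bm b_j')\in\integer^n$; by Proposition~\ref{prop:cor}, $\bm u_1,\dots,\bm u_{n-k}$ is a basis of $\CL^\perp({\bf A})$, so ${\bf A}^T\bm u_i=0$ for $i\le n-k$. Hence every $\bm b_i'^{*}$ with $i\le n-k$, being an $\real$-linear combination of $\bm b_1',\dots,\bm b_i'$, has zero first $k$ coordinates, so the first $k$ coordinates of $\bm b_{n-k+1}'^{*}=\bm b_{n-k+1}'-\sum_{i\le n-k}\mu_{n-k+1,i}\bm b_i'^{*}$ are exactly those of $\bm b_{n-k+1}'$, i.e.\ $K{\bf A}^T\bm u_{n-k+1}$. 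This vector is nonzero (otherwise $\bm u_{n-k+1}\in\CL^\perp({\bf A})$, contradicting the linear independence of $\bm u_1,\dots,\bm u_{n-k+1}$), so being $K$ times a nonzero integer vector it has norm $\ge K$, whence $\norm{\bm b_{n-k+1}'^{*}}\ge K$. Finally, ${\bf B}'$ being LLL-reduced gives $\norm{\bm b_l'^{*}}^2\le 2\norm{\bm b_{l+1}'^{*}}^2$ for every $l<n$, so iterating from $l=n-k+1$ up to any $j\le n$ yields $\norm{\bm b_j'^{*}}^2\ge 2^{-(k-1)}\norm{\bm b_{n-k+1}'^{*}}^2\ge 2^{-(k-1)}K^2$, that is $\norm{\bm b_j'^{*}}\ge 2^{-(k-1)/2}K$ for all $n-k<j\le n$.

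To finish, for $i\le n-k<j\le n$ I combine the two bounds: by~\eqref{eq:chooseC} one has $2^{-(k-1)/2}K>2^{(n-k)/2}\lambda_{n-k}(\CL^\perp({\bf A}))>2^{(n-k-1)/2}\lambda_{n-k}(\CL^\perp({\bf A}))$, so $\norm{\bm b_j'^{*}}\ge 2^{-(k-1)/2}K>2^{(n-k-1)/2}\lambda_{n-k}(\CL^\perp({\bf A}))\ge\norm{\bm b_i'^{*}}$, which is the asserted inequality. The main obstacle, and the only nonroutine point, is showing that the scaling factor $K$ genuinely survives into the Gram--Schmidt vector $\bm b_{n-k+1}'^{*}$ rather than just into $\bm b_{n-k+1}'$; this is exactly what the observation about the vanishing first $k$ coordinates of $\bm b_1'^{*},\dots,\bm b_{n-k}'^{*}$ buys us, and propagating it to all larger indices costs only the standard LLL factor $2^{(k-1)/2}$, which the slack in~\eqref{eq:chooseC} comfortably absorbs.
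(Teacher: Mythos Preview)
Your proof is correct and follows essentially the same route as the paper: bound $\norm{\bm b_i'^{*}}$ for $i\le n-k$ via~\eqref{eq:lll-gen} applied to the LLL-reduced basis of~$\CL^\perp({\bf A})$, bound $\norm{\bm b_j'^{*}}$ for $j>n-k$ by showing $\norm{\bm b_{n-k+1}'^{*}}\ge K$ and propagating with the Lov\'asz condition, then compare using~\eqref{eq:chooseC}. The paper writes the second step more tersely (it simply asserts $\norm{\bm b_{n-k+1}'^{*}}^2\ge K^2$ and uses the looser factor $2^{-k}$ instead of your $2^{-(k-1)}$), whereas you spell out explicitly why the first $k$ coordinates survive into the Gram--Schmidt vector; this justification is the same one the paper gives later, in the proof of Theorem~\ref{thm:main}.
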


\begin{proof}
	From Proposition \ref{prop:cor}, we know that ${\bf B}'$ is of the form
	\[
	\begin{pmatrix}
	\bm 0 & {\bf *} \\
	{\bf C} & {\bf *} \\
	\end{pmatrix}\mc
	\]
	and that the columns of ${\bf C} \in \integer^{n \times k}$ form an LLL-reduced basis
	of $\CL^\perp({\bf A})$. We thus have, for $i\le n-k$
	\[
	\norm{\bm b_i'^{*}}^2 \le \norm{\bm b_i'}^2 = \norm{\bm c_i}^2
	\le 2^{n-k-1} \lambda_{n-k}^2(\CL^\perp({\bf A})).
	\]
	Further, for $n-k< j\le n$, we have
	\[
	\norm{\bm b_j'^{*}}^2\ge 2^{-k}\norm{\bm b_{n-k+1}'^{*}}^2\ge 2^{-k}K^2.
	\]
	The choice of~$K$ allows to complete the proof.
\end{proof}

We observe again that combining the condition of Proposition~\ref{prop:SG} together with
a general purpose bound on~$\lambda_{n-k}(\mathcal{L}^{\perp}({\bf A}))$ allows
to obtain a sufficient bound on~$K$ that can be efficiently derived from~${\bf A}$.

Although $\norm{\bm b_{s_i}^{*}}$ is relatively small with respect to~$K$, it can be bounded from below. In fact, we have a more general lower bound:
\begin{equation}
\label{eq:bndBs}
\forall i \leq n, \ \norm{\bm b_{i}^{*}}\ge 1.
\end{equation}
This is because that there is a coefficient in $\bm b_{i}$  which is equal to~$1$ and~$0$ for all other~${\bf b}_j$'s. This lower bound will be helpful in the proof of Theorem~\ref{thm:main}.

\subsection{Bounding the number of LLL swaps}
\label{subsec:terminate}

Suppose that $K$ is a sufficient large positive integer satisfying~\eqref{eq:chooseC}. Proposition~\ref{prop:cor} guarantees that we can use LLL with ${\bf B}=\Ext_K({\bf A})$ as input to compute an LLL-reduced basis for~$\mathcal{L}^{\perp}({\bf A})$.
We now study the number of LLL swaps performed in this call to the LLL algorithm.

\begin{theorem}
	\label{thm:main}
	Let ${\bf A}\in\integer^{n\times k}$ with a non-zero $k$-th principal minor,
	and $K$ an integer satisfying~\eqref{eq:chooseC}. Then,
	given ${\bf B}=\Ext_K({\bf A})$ as its input, LLL computes (as a submatrix of the returned basis) an
	LLL-reduced basis of $\CL^\perp({\bf A})$ after at most $\CO(k^3 + k(n-k)(1+\log\norm{{\bf A}}))$ LLL swaps, where $\norm{{\bf A}}$ is the maximum of the Euclidean norm of all rows and columns of the matrix ${\bf A}$.
\end{theorem}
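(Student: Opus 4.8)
The plan is to invoke Theorem~\ref{th:pot} with the specific choice $k$ equal to the parameter $k$ in the statement, i.e. to bound the number of LLL swaps by $\big(\Pi_k({\bf B}) - \Pi_k({\bf B}')\big)/\log(2/\sqrt{3})$, and then to bound $\Pi_k({\bf B})$ from above and $\Pi_k({\bf B}')$ from below. Recall
\[
\Pi_k({\bf C}) = \sum_{j=1}^{k-1}(k-j)\log\norm{\bm c_{\ell_j}^{*}} - \sum_{i=1}^{n-k} i\log\norm{\bm c_{s_i}^{*}} + \sum_{i=1}^{n-k}s_{i}.
\]
The term $\sum_i s_i$ contributes at most $\sum_{i=n-k+1}^{n} i = O(n^2)$ in absolute value for either basis, so it is harmless (and in the refined accounting it contributes $O(k(n-k))$ once we know the partition of indices, but $O(n^2)=O(k^2+k(n-k)+ (n-k)^2)$ needs a little care --- I will come back to this). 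The real work is in the Gram--Schmidt norm terms.

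First I would control $\Pi_k({\bf B}')$ from below. By Proposition~\ref{prop:SG}, for the output basis ${\bf B}'$ the set of the $n-k$ smallest Gram--Schmidt norms is exactly $S=\{1,\dots,n-k\}$ and $L=\{n-k+1,\dots,n\}$. The vectors $\delta_{n,k}(\bm b_1'),\dots,\delta_{n,k}(\bm b_{n-k}')$ form an LLL-reduced basis of $\CL^\perp({\bf A})$, so the products $\prod_{i\le n-k}\norm{\bm b_i'^{*}} = \det(\CL^\perp({\bf A}))\le\norm{{\bf A}}^k$ by \eqref{eq:det-bound}; combined with the lower bound $\norm{\bm b_i'^{*}}\ge 1$ from \eqref{eq:bndBs}, each $\log\norm{\bm b_i'^{*}}\in[0, k\log\norm{{\bf A}}]$, so $-\sum_{i}i\log\norm{\bm b_{i}'^{*}}\ge -(n-k)^2\cdot k\log\norm{{\bf A}}$ — this is too lossy. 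A better route: use that an LLL-reduced basis of $\CL^\perp({\bf A})$ satisfies $\norm{\bm b_i'^{*}}^2\le 2^{n-k-1}\lambda_{n-k}^2$ and, more usefully, the ``reversed'' bound $\prod_{i\le t}\norm{\bm b_i'^{*}}\le 2^{O(t^2)}\det(\CL^\perp({\bf A}))$ for each prefix $t\le n-k$ (standard LLL), which gives $-\sum_i i\log\norm{\bm b_i'^{*}}\ge -O((n-k)^2) - (n-k)\log\det(\CL^\perp({\bf A}))\ge -O((n-k)^2) - (n-k)k\log\norm{{\bf A}}$. For the $L$-part of $\Pi_k({\bf B}')$: $\sum_{j<k}(k-j)\log\norm{\bm b_{n-k+j}'^{*}}\ge 0$ once we check $\norm{\bm b_j'^{*}}\ge 1$ for $j>n-k$ too (which holds by \eqref{eq:bndBs} applied to ${\bf B}'$, since it is obtained by unimodular transforms from ${\bf B}$ — actually simpler: these norms exceed $2^{-k/2}K\ge 1$). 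So $\Pi_k({\bf B}')\ge -O(n^2) - (n-k)k\log\norm{{\bf A}}$.

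Next, $\Pi_k({\bf B})$ from above. The hypothesis that ${\bf A}$ has non-zero $k$-th principal minor is exactly what pins down the Gram--Schmidt structure of the \emph{input} ${\bf B}=\Ext_K({\bf A})$: the first $k$ rows are $K{\bf A}^T$, and since the leading $k\times k$ minor of ${\bf A}$ is non-zero, the first $k$ columns $\bm b_1,\dots,\bm b_k$ of ${\bf B}$ have Gram--Schmidt norms of size $\Theta(K)\cdot(\text{bounded by }\norm{{\bf A}})$ — so $L=\{1,\dots,k\}$ and $S=\{k+1,\dots,n\}$ for ${\bf B}$. Concretely $\norm{\bm b_j^{*}}\le\norm{\bm b_j}\le K\norm{{\bf A}}+1$ for $j\le k$, and for $j>k$ the column $\bm b_j$ is a unit coordinate vector possibly plus the $K{\bf A}^T$-block, but after projecting orthogonally to $\bm b_1^*,\dots,\bm b_k^*$ (which already span the image of the $K{\bf A}^T$ block, using the minor hypothesis) one gets $\norm{\bm b_j^{*}}\le 1$ — and together with \eqref{eq:bndBs}, $\norm{\bm b_j^{*}}=1$ exactly, so $-\sum_{i}i\log\norm{\bm b_{s_i}^{*}} = 0$ for the input. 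Hence $\Pi_k({\bf B})\le \sum_{j=1}^{k-1}(k-j)\log(K\norm{{\bf A}}+1) + O(n^2) = O(k^2\log(K\norm{{\bf A}})) + O(n^2)$.

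Putting these together gives a bound of $O(k^2\log(K\norm{{\bf A}})) + O(n^2) + (n-k)k\log\norm{{\bf A}})$ — but this still has a $\log K$, which is \emph{not} what the theorem claims. The resolution, and the main obstacle, is that one must \emph{not} use the naive size bound $\norm{\bm b_j^{*}}\le K\norm{{\bf A}}+1$ for the $L$-indices; instead one compares against the \emph{product} $\prod_{j\in L}\norm{\bm b_j^{*}}$, which by Lemma~\ref{lem:montonicityGS} is essentially invariant (the full determinant $\det({\bf B}^T{\bf B})^{1/2}$ is invariant, and the $S$-part of the product is pinned to $1$ at both ends). So in fact $\prod_{j\in L}\norm{\bm b_j^{*}}$ is the same for ${\bf B}$ and ${\bf B}'$ up to the $S$-part, and equals (up to bounded factors) $K^k\det(\text{row lattice of }{\bf A})$. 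The point is that the $\log K^k = k\log K$ contribution appears \emph{with the same sign and essentially the same magnitude} in $\Pi_k({\bf B})$ and in $\Pi_k({\bf B}')$, so it \emph{cancels} in the difference $\Pi_k({\bf B})-\Pi_k({\bf B}')$. Making this cancellation rigorous is the heart of the argument: one writes $\Pi_k({\bf B})-\Pi_k({\bf B}')$, groups the $L$-sum by telescoping against the invariant full product, and is left only with terms involving \emph{ratios} of Gram--Schmidt norms within $L$ (bounded by $2^{O(k^2)}$ using prefix-determinant bounds of the LLL-reduced ${\bf B}'$ and the explicit structure of ${\bf B}$, giving the $O(k^3)$ after dividing by $\log(2/\sqrt3)$, with the extra factor $k$ coming from the weights $k-j$) plus the $S$-terms already shown to be $0$ for ${\bf B}$ and $\ge -O((n-k)^2)-(n-k)k\log\norm{{\bf A}}$ for ${\bf B}'$, plus the index sums $\sum s_i$ differing by $O(k(n-k))$ since $S$ moves from $\{k+1,\dots,n\}$ to $\{1,\dots,n-k\}$ (a shift of $k$ in each of $n-k$ indices). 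Collecting: $O(k^3) + O(k(n-k)) + O((n-k)^2) + (n-k)k\log\norm{{\bf A}}$; and a final clean-up — absorbing $(n-k)^2$ and the stray $k(n-k)$ into $k(n-k)(1+\log\norm{{\bf A}})$ requires noting $(n-k)^2\le (n-k)\cdot k\cdot\text{(something)}$ only when $n-k\le k\cdot(\dots)$, so more likely the paper's stated bound implicitly uses $n-k\le k$ or re-derives the $S$-contribution more tightly (e.g. $\norm{\bm b_i'^{*}}\le\lambda_{n-k}$ and Minkowski giving $\sum_i\log\norm{\bm b_i'^{*}}\le\frac12(n-k)\log(n-k)+\log\det$) to get $O((n-k)\log(n-k)) = O((n-k)^2)$ absorbed differently — I would reconcile the exact form of the $(n-k)$-terms against the claimed $k(n-k)(1+\log\norm{{\bf A}})$ at the end. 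The decisive and delicate step, though, is the $\log K$ cancellation via the invariance in Lemma~\ref{lem:montonicityGS}.
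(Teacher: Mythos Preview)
Your overall plan---bound $\Pi_k({\bf B})$ from above, $\Pi_k({\bf B}')$ from below, and invoke Theorem~\ref{th:pot}---is exactly the paper's. But you miss the single observation that makes the $\log K$ cancellation immediate, and the workaround you propose is both unnecessary and flawed in places.

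The paper does \emph{not} bound the $L$-part of $\Pi_k({\bf B}')$ from below by~$0$. It observes that since the top $k$ coordinates of $\bm b_i'^{*}$ vanish for $i\le n-k$, the vector $\bm b_{n-k+1}'^{*}$ has the same top block as $\bm b_{n-k+1}'$, which is a nonzero element of $K\integer^k$; hence $\norm{\bm b_{n-k+1}'^{*}}\ge K$, and by LLL-reducedness $\norm{\bm b_{n-k+j}'^{*}}\ge 2^{(1-j)/2}K\ge 2^{(1-k)/2}K$ for every $j\le k$. This yields
\[
\sum_{j=1}^{k-1}(k-j)\log\norm{\bm b_{n-k+j}'^{*}}\ \ge\ \frac{k(k-1)}{2}\Bigl(\log K+\tfrac{1-k}{2}\Bigr),
\]
and the term $\tfrac{k(k-1)}{2}\log K$ here cancels exactly the same term in the upper bound $\Pi_k({\bf B})\le\tfrac{k(k-1)}{2}(1+\log K+\log\norm{{\bf A}})+\sum_i s_i$. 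No telescoping against the invariant determinant is needed; the $\log K$ appears with the same coefficient $\tfrac{k(k-1)}{2}$ on both sides simply because the weights $(k-j)$ are the same.

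Two further corrections. First, your claim that $\norm{\bm b_j^{*}}=1$ exactly for $j>k$ on the input is false: the orthogonal projection onto $\Span(\bm b_1,\ldots,\bm b_k)$ does not kill the top block, because those vectors have nonzero bottom entries $e_1,\ldots,e_k$ (already for $n=2$, $k=1$, ${\bf A}=(1,1)^T$ one computes $\norm{\bm b_2^{*}}^2=1+\tfrac{K^2}{K^2+1}\to 2$). The paper sidesteps this entirely: it uses only $\norm{\bm b_{s_i}^{*}}\ge 1$ from~\eqref{eq:bndBs} to drop the middle sum, and the trivial inequality $s_i\le k+i$ for the index sum; it never identifies $S$ and $L$ for the input. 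Second, your stray $(n-k)^2$ term is an artifact of a loose estimate. Since the minimum Gram--Schmidt norm is nondecreasing under LLL swaps (Lemma~\ref{lem:montonicityGS}), all $\norm{\bm b_i'^{*}}\ge 1$; hence
\[
\sum_{i\le n-k} i\,\log\norm{\bm b_i'^{*}}\ \le\ (n-k)\sum_{i\le n-k}\log\norm{\bm b_i'^{*}}\ =\ (n-k)\log\det(\CL^\perp({\bf A}))\ \le\ (n-k)k\log\norm{{\bf A}},
\]
with no $(n-k)^2$ contribution.
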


\begin{proof}
	From Proposition \ref{prop:cor}, the LLL algorithm allows to obtain a LLL-reduced basis for $\CL^\perp({\bf A})$. We know from Theorem~\ref{th:pot} that in order to obtain an upper bound on the number of LLL swaps, it suffices to find an upper bound to~$\Pi_k ({\bf B})$ and a lower bound on~$\Pi_k({\bf B}')$, where~${\bf B}'$ is the basis returned by LLL when given~${\bf B}$ as input.
	From~\eqref{eq:bndBs} we have
	\[
	\begin{split}
	\Pi_k({\bf B}) &= \sum_{j=1}^k(k-j)\log\norm{\bm b_{\ell_j}^{*}} - \sum_{i=1}^{n-k}i\log\norm{\bm b_{s_i}^{*}} + \sum_{i=1}^{n-k}s_{i}\\
	& \leq  \sum_{j=1}^k(k-j)\log\norm{\bm b_{\ell_j}^{*}} + \sum_{i=1}^{n-k}s_{i} \\
	& \leq \sum_{j=1}^k(k-j)\log\norm{\bm b_{\ell_j}} + \sum_{i=1}^{n-k} (k+i)\\
	&\leq(1+\log K +\log\norm{{\bf A}}) \frac{k(k-1)}{2} + \frac{(n-k)(n+k+1)}{2}.
	\end{split}
	\]
	Thanks to  Proposition~\ref{prop:SG}, we have
	\[
	\begin{split}
	\Pi_k({\bf B}') &= \sum_{j=1}^k(k-j)\log\norm{\bm b_{\ell_j'}^{\prime*}} - \sum_{i=1}^{n-k}i\log\norm{\bm b_{s_i'}'^{*}} + \sum_{i=1}^{n-k}s_{i}'\\
	&= \sum_{j=1}^k(k-j)\log\norm{\bm b_{n-k+j}'^{*}} - \sum_{i=1}^{n-k}i\log\norm{\bm b_{i}'^{*}} + \sum_{i=1}^{n-k}i.
	\end{split}
	\]
	Since the first $k$ coefficients of $\bm b_{i}'^{*}$ are $0$ (for~$i\leq n-k$) and~${\bf A}$ is full-rank,  we must have $\norm{\bm b_{n-k+1}'^{*}}\ge K$. Further,
	since ${\bf B}'$ is LLL-reduced, combining with~\eqref{eq:lll2} we have, for $j \leq k$
	\[
	\norm{\bm b_{n-k+j}'^{*}}\ge 2^{\frac{1-j}{2}}\norm{\bm b_{n-k+1}'^{*}}\ge 2^{\frac{1-j}{2}}K\ge 2^{\frac{1-k}{2}}K.
	\]
	We hence obtain
	\[
	\begin{split}
	\Pi_k({\bf B}')
	&\ge\left(\log K +\frac{1-k}{2}\right) \sum_{j=1}^{k} (k-j) -
	\sum_{i=1}^{n-k} i \log\norm{\bm b_{i}'^*}  \\
	&\text{\hspace*{0.4cm}}+ \frac{(n-k)(n-k+1)}{2} \\
	&\ge
	\frac{k(k-1)}{2}\left(\log K +\frac{1-k}{2}\right) - (n-k)\sum_{i=1}^{n-k}
	\log\norm{\bm b_{i}'^*}\\
	&\text{\hspace*{0.4cm}}+ \frac{(n-k)(n-k+1)}{2},
	\end{split}
	\]
	where we used the fact that all~$\|{\bf b}_i'^*\|$'s are~$\geq 1$. 
	This is true
	for the~$\|{\bf  b}_i^*\|$'s and LLL cannot make the minimum Gram-Schmidt
	norm decrease. Using~\eqref{eq:det-bound}, we obtain:
	\[
	\begin{split}
	\Pi_k({\bf B}')&
	\ge \frac{k(k-1)}{2}\left(\log K +\frac{1-k}{2}\right) - (n-k)k \log \|{\bf A}\| \\
	&\text{\hspace*{0.4cm}}+ \frac{(n-k)(n-k+1)}{2}.
	\end{split}
	\]
	
	Finally, using Theorem~\ref{th:pot}, we obtain that the number of LLL swaps is no greater than
	\[
	\frac{\Pi_k({\bf B}) - \Pi_k({\bf B}')}{\log\left(\frac{2}{\sqrt{3}}\right)}
	\le\frac{k(n-\frac{k}{2})\log\norm{{\bf A}} + k^3 + (n-k)k}{\log\left(\frac{2}{\sqrt{3}}\right) },
	\]
	which is of $\CO(k^3+k(n-k)(1+\log\norm{{\bf A}}))$.
\end{proof}

In Table~\ref{tab:k} we compare favorably ($k=1,n/2$) the result of Theorem~\ref{thm:main} to the bounds on the number of swaps 
using the classical potential~(\ref{eq:clsPot}) and~$K$ fixed 
from the general threshold~\eqref{eq:effectiveK} or the heuristic one~\eqref{eq:effectiveK2}.  
We also consider $k=n-1$. However, in the latter case the problem reduces to linear 
system solving, and different techniques such as those in~\cite{Storjohann2005} should be considered.

\begin{table}[H]
	\caption{Upper bounds on the number of LLL swaps for different $k$ ($K$ sufficiently large), $\alpha = \log\norm{{\bf A}}$.}
	\label{tab:k}
	\begin{tabular}{lccc}
		\hline 
		& Classical analysis \eqref{eq:effectiveK} & Heuristic \eqref{eq:effectiveK2}& 
		New analysis   \\
		\hline 
		$k=1$&  $\CO(n^2\log n + n\alpha)$ &  $\CO(n^2 + n\alpha)$ &   $\CO(n\alpha)$ \\
		$k={n}/{2}$&  $\CO(n^3\log n + n^3\alpha)$  &  $\CO(n^3 + n^2\alpha)$ & $\CO(n^3+ n^2\alpha)$  \\
		$k=n-1$& $\CO(n^2\alpha)$ &   $\CO(n^2\alpha)$ &  $\CO(n^3+ n\alpha)$\\
		\hline 
	\end{tabular}
\end{table}

With the potential function $\Pi$ of~\eqref{eq:clsPot}, 
we have
\[
\begin{split}
\Pi({\bf B})
&\leq \log\prod_{i \leq n} \left(K^2 \|{\bf A}\|^2\right)^{\frac{\min(k,i)}{2}}\\ &\le
\frac{k(2n-k+1)}{2}\log\left(K \|{\bf A}\|\right).
\end{split}
\]
The bound on the number of LLL swaps obtained using the classical potential
is therefore~$\CO(k(n-k/2) (1+\log K + \log \|{\bf A}\|)$. While we see from Theorem \ref{thm:main} that  
the actual number of swaps for computing an LLL-reduced basis for $\CL^\perp({\bf A})$ does not grow with~$K$ when~$K$ is sufficiently large.

\section*{Acknowledgments}
	Our thanks go to anonymous 	referees for helpful comments, which make the presentation of the paper better. Jingwei Chen was partially supported by NNSFC (11501540, 11671377, 11771421) and Youth Innovation Promotion Association, CAS. Damien Stehl\'e was supported by ERC Starting Grant ERC-2013-StG-335086-LATTAC.



\newcommand{\noopsort}[1]{}



\begin{thebibliography}{19}
	\expandafter\ifx\csname natexlab\endcsname\relax\def\natexlab#1{#1}\fi
	\expandafter\ifx\csname url\endcsname\relax
	\def\url#1{\texttt{#1}}\fi
	\expandafter\ifx\csname urlprefix\endcsname\relax\def\urlprefix{URL }\fi
	
	\bibitem[{Chen et~al.(2013)Chen, Stehl{\'e}, and
		Villard}]{ChenStehleVillard2013}
	Chen, J., Stehl{\'e}, D., Villard, G., 2013. A new view on {HJLS} and {PSLQ}:
	Sums and projections of lattices. In: Kauers, M. (Ed.), Proceedings of
	ISSAC'13 (June 26-29, 2013, Boston, USA). ACM, New York, pp. 149--156, doi:
	\href{http://dx.doi.org/10.1145/2465506.2465936}{10.1145/2465506.2465936}.
	
	\bibitem[{Chen and Storjohann(2005)}]{ChenStorjohann2005}
	Chen, Z., Storjohann, A., 2005. A {BLAS} based {C} library for exact linear
	algebra on integer matrices. In: Kauers, M. (Ed.), Proceedings of ISSAC '05
	(Beijing, China, July 24--27, 2005). ACM, New York, pp. 92--99, doi:
	\href{http://doi.org/10.1145/1073884.1073899}{10.1145/1073884.1073899}.
	
	\bibitem[{H{\aa}stad et~al.(1989)H{\aa}stad, Just, Lagarias, and
		Schnorr}]{HastadJustLagariasSchnorr1989}
	H{\aa}stad, J., Just, B., Lagarias, J.~C., Schnorr, C.-P., 1989. Polynomial
	time algorithms for finding integer relations among real numbers. SIAM
	Journal of Computing 18~(5), 859--881, (Preliminary version in STACS'86) doi:
	\href{http://dx.doi.org/10.1137/0218059}{10.1137/0218059}, erratum doi:
	\href{http://dx.doi.org/10.1137/130947799}{10.1137/130947799}.
	
	\bibitem[{Havas et~al.(1998)Havas, Majewski, and
		Matthews}]{HavasMajewskiMatthews1998}
	Havas, G., Majewski, B.~S., Matthews, K.~R., 1998. Extended {GCD} and {H}ermite
	normal form algorithms via lattice basis reduction. Experimental Mathematics
	7~(2), 125--136, doi:
	\href{http://doi.org/10.1080/10586458.1998.10504362}{10.1080/10586458.1998.10504362}.
	
	\bibitem[{{\noopsort{Hoeij}}~van Hoeij and Novocin(2012)}]{vanHoeijNovocin2012}
	{\noopsort{Hoeij}}~van Hoeij, M., Novocin, A., 2012. Gradual sub-lattice
	reduction and a new complexity for factoring polynomials. Algorithmica
	63~(3), 616--633, (Preliminary version in LATIN'10) doi:
	\href{http://dx.doi.org/10.1007/s00453-011-9500-y}{10.1007/s00453-011-9500-y}.
	
	\bibitem[{Kannan et~al.(1984)Kannan, Lenstra, and
		Lov{\'{a}}sz}]{KannanLenstraLovasz1984}
	Kannan, R., Lenstra, A.~K., Lov{\'{a}}sz, L., 1984. Polynomial factorization
	and nonrandomness of bits of algebraic and some transcendental numbers. In:
	DeMillo, R.~A. (Ed.), Proceedings of the 16th Annual ACM Symposium on Theory
	of Computing (April 30 - May 2, 1984, Washington, DC, USA). ACM, New York,
	pp. 191--200, doi:
	\href{http:doi.org/10.1145/800057.808681}{10.1145/800057.808681}.
	
	\bibitem[{Lenstra et~al.(1982)Lenstra, Lenstra, and
		Lov{\'a}sz}]{LenstraLenstraLovasz1982}
	Lenstra, A.~K., Lenstra, H.~W., Lov{\'a}sz, L., 1982. Factoring polynomials
	with rational coefficients. {M}athematische {A}nnalen 261~(4), 515--534, doi:
	\href{http://doi.org/10.1007/BF01457454}{10.1007/BF01457454}.
	
	\bibitem[{Neumaier and Stehl\'e(2016)}]{NeumaierStehle2016}
	Neumaier, A., Stehl\'e, D., 2016. Faster {LLL}-type reduction of lattice bases.
	In: Abramov, S.~A., Zima, E.~V., Gao, X.-S. (Eds.), Proceedings of ISSAC '16
	(July 20--22, 2016, Waterloo, Ontario, Canada). ACM, New York, pp. 373--380,
	doi: \href{https://doi.org/10.1145/2930889.2930917}{10.1145/2930889.2930917}.
	
	\bibitem[{Nguyen and Stern(1997)}]{NguyenStern1997}
	Nguyen, P., Stern, J., 1997. {M}erkle-{H}ellman revisited: {A} cryptanalysis of
	the {Q}u-{V}anstone cryptosystem based on group factorizations. In: Kaliski,
	B.~S. (Ed.), Advances in Cryptology -- CRYPTO '97 (August 17--21, 1997, Santa
	Barbara). Vol. 1294 of {L}ecture {N}otes in {C}omputer {S}cience. Springer,
	Heidelberg, pp. 198--212, doi:
	\href{https://doi.org/10.1007/BFb0052236}{10.1007/BFb0052236}.
	
	\bibitem[{Nguyen(1999)}]{Nguyen1999}
	Nguyen, P.~Q., 1999. La g\'eom\'etrie des nombres en cryptologie. Ph.D. thesis,
	Universit\'e Paris 7, Paris, available at
	\url{ftp://ftp.di.ens.fr/pub/users/pnguyen/PhD.pdf}.
	
	\bibitem[{Nguyen and Vall{\'e}e(2010)}]{NguyenVallee2010}
	Nguyen, P.~Q., Vall{\'e}e, B. (Eds.), 2010. The LLL Algorithm: Survey and
	Applications. Springer, Berlin, doi:
	\href{http://doi.org/10.1007/978-3-642-02295-1}{10.1007/978-3-642-02295-1}.
	
	\bibitem[{Novocin et~al.(2011)Novocin, Stehl\'e, and
		Villard}]{NovocinStehleVillard2011}
	Novocin, A., Stehl\'e, D., Villard, G., 2011. An {LLL}-reduction algorithm with
	quasi-linear time complexity: extended abstract. In: Fortnow, L., Vadhan,
	S.~P. (Eds.), Proceedings of STOC '11 (June 6--8, 2011, San Jose, USA). ACM,
	New York, pp. 403--412, doi:
	\href{https://doi.org/10.1145/1993636.1993691}{10.1145/1993636.1993691}.
	
	\bibitem[{Pohst(1987)}]{Pohst1987}
	Pohst, M.~E., 1987. A modification of the {LLL} reduction algorithm. {J}ournal
	of {S}ymbolic {C}omputation 4~(1), 123--127, doi:
	\href{https://doi.org/10.1016/S0747-7171(87)80061-5}{10.1016/S0747-7171(87)80061-5}.
	
	\bibitem[{Schmidt(1968)}]{Schmidt1968}
	Schmidt, W.~M., 1968. Asymptotic formulae for point lattices of bounded
	determinant and subspaces of bounded height. Duke Mathematical Journal
	35~(2), 327--339, doi:
	\href{https://doi.org/10.1215/S0012-7094-68-03532-1}{10.1215/S0012-7094-68-03532-1}.
	
	\bibitem[{Sims(1994)}]{Sims1994}
	Sims, C.~C., 1994. Computation with Finitely Presented Groups. Vol.~48 of
	Encyclopedia of Mathematics and Its Application. Cambridge University Press,
	Cambridge.
	
	\bibitem[{Stehl{\'e}(2017)}]{Stehle2017}
	Stehl{\'e}, D., 2017. Lattice reduction algorithms. In: Burr, M.~A., Yap,
	C.~K., Safey El~Din, M. (Eds.), Proceedings of ISSAC '17 (July 25-28, 2017,
	Kaiserslautern, Germany). ACM, New York, USA, pp. 11--12, doi:
	\href{http://doi.org/10.1145/3087604.3087665}{10.1145/3087604.3087665}.
	
	\bibitem[{Storjohann(1996)}]{Storjohann1996}
	Storjohann, A., July 1996. Faster algorithms for integer lattice basis
	reduction. Tech. Rep. 249, ETH, Department of Computer Scicence, Z\"urich,
	Switzerland, available at
	\url{ftp://ftp.inf.ethz.ch/pub/publications/tech-reports/2xx/249.ps.gz}.
	
	\bibitem[{Storjohann(2005)}]{Storjohann2005}
	Storjohann, A., 2005. The shifted number system for fast linear algebra on
	integer matrices. Journal of Complexity 21~(4), 609--650, doi:
	\href{https://doi.org/10.1016/j.jco.2005.04.002}{10.1016/j.jco.2005.04.002}.
	
	\bibitem[{Storjohann and Labahn(1996)}]{StorjohannLabahn1996}
	Storjohann, A., Labahn, G., 1996. Asymptotically fast computation of hermite
	normal forms of integer matrices. In: Engeler, E., Caviness, B.~F., Lakshman,
	Y.~N. (Eds.), Proceedings of ISSAC '96 (July 24--26, 1996, Zurich,
	Switzerland). ACM, New York, pp. 353--357, doi:
	\href{https://dx.doi.org/10.1145/236869.237083}{10.1145/236869.237083}.
	
\end{thebibliography}
\end{document}